\documentclass[letterpaper, 10 pt, conference]{ieeeconf}  

\IEEEoverridecommandlockouts                              

\usepackage{graphicx,subcaption}
\usepackage{amsmath,amssymb}
\usepackage{mathrsfs}
\usepackage{booktabs}
\usepackage{color}
\usepackage{mathrsfs}
\usepackage{algorithm}

\newtheorem{theorem}{Theorem}
\newtheorem{assumption}{Assumption}

\newtheorem{lemma}{Lemma}
\newtheorem{definition}{Definition}

\newtheorem{remark}{Remark}

\newcommand{\OneTwo}[2]
{\begin{bmatrix} {#1} & {#2}
\end{bmatrix}
}
\newcommand{\TwoTwo}[4]
{\begin{bmatrix}
{#1} & {#2} \\
{#3} & {#4}
\end{bmatrix}
}

\newcommand{\Ltwo}{\boldsymbol{\rm L}_{2}}
 
\newcommand{\Ltwoe}{\boldsymbol{\rm L}_{2e}}

\title{\LARGE \bf
Consensus in Plug-and-Play Heterogeneous Dynamical Networks: A Passivity Compensation Approach
}

\author{Yongkang Su$^{1}$, Sei Zhen Khong$^{2}$  and Lanlan Su$^{3}$
\thanks{This work was supported in part by the National Science and Technology Council of Taiwan (grant numbers: 114-2622-8-110-003, 113-2222-E-110-002-MY3 and 114-2218-E-007-011).}
\thanks{$^{1}$Y. Su is with the School of Electrical and Electronic Engineering, University of Sheffield, S10 2TN, Sheffield, UK.
        {\tt\small ysu34@sheffield.ac.uk}}%
\thanks{$^{2}$S. Z. Khong is with the Department of Electrical Engineering, National Sun Yat-sen University, Kaohsiung 804201, Taiwan.
        {\tt\small szkhong@mail.nsysu.edu.tw}}%
\thanks{$^{3}$L. Su  is with the Department of Electrical and Electronic Engineering, University of Manchester, M13 9PL, Manchester, UK.
        {\tt\small lanlan.su@manchester.ac.uk}}
}

\begin{document}

\maketitle
\thispagestyle{empty}
\pagestyle{empty}

\begin{abstract}
This paper investigates output consensus in heterogeneous dynamical networks within a plug-and-play framework. The networks are interconnected through nonlinear diffusive couplings and operate in the presence of measurement and communication noise. Focusing on systems that are input feedforward passive (IFP), we propose a passivity-compensation approach that exploits the surplus passivity of coupling links to locally offset shortages of passivity at the nodes. This mechanism enables subnetworks to be interconnected without requiring global reanalysis, thereby preserving modularity. Specifically, we derive locally verifiable interface conditions, expressed in terms of passivity indices and coupling gains, to guarantee that consensus properties of individual subnetworks are preserved when forming larger networks.
\end{abstract}

\begin{keywords}
Passivity compensation, heterogeneous networks, plug-and-play framework, consensus.
\end{keywords}

\section{Introduction}

The growing scale and complexity of modern networked systems, such as power grids, robotic swarms, transportation networks, and communication infrastructures, have motivated the development of plug-and-play frameworks \cite{crisostomi2014plug,riverso2014plug,studli2023plug}. A defining feature of such frameworks is that new components or subnetworks can be integrated into an existing network without having to redesign the entire system to satisfy certain control objectives. This modularity is crucial for ensuring scalability and flexibility in large-scale dynamical networks.

Traditionally, plug-and-play framework refers to the case where a single system (a.k.a, node) is added to or removed from a network. Each system is designed to satisfy local criteria so that the global system maintains its desired properties, such as stability or consensus, when systems join or leave \cite{riverso2014plug,sadabadi2017plug}. This concept was naturally extended to network-level plug-and-play, where an entire subnetwork (a group of interconnected systems) is treated as the unit to be integrated \cite{zhong2024secure}. The central challenge in plug-and-play is to preserve global properties of interest while requiring modifications only at the interfaces (i.e., boundary nodes or coupling links) between the new and existing network. This yields a modular design architecture: subnetworks can be interconnected while leaving their internal structures unchanged, with only interface conditions to be verified.

A fundamental property in dynamical networks is consensus, where systems’ outputs synchronise despite heterogeneity, disturbances and interconnections. Consensus is vital in applications ranging from coordination of autonomous vehicles to synchronisation in distributed energy systems \cite{jadbabaie2003coordination,olfati2007consensus,dorfler2013synchronization}. Preserving consensus in a plug-and-play setting is thus of both theoretical and practical significance, and fruitful research results have been developed, see, e.g., \cite{studli2023plug,qu2014modularized,manfredi2023distributed}. However, the plug-and-play frameworks considered in these works are limited to a single system being added to or removed from the network. To this end, we adopt a passivity-based approach in this paper to address a more general scenario of network-level plug-and-play consensus.

Passivity theory provides powerful distributed tools for analysing large-scale interconnected systems, since passivity is compositional in the sense that the passivity of subsystems implies passivity of their interconnections through appropriate mappings, such as those defined by a network structure \cite{bai2011cooperative}. In particular, we consider systems that are not necessarily passive but can be characterised by input feedforward passivity (IFP) indices, which quantify surpluses or shortages of passivity. The class of IFP systems extends beyond passive systems and, in particular, includes all asymptotically stable linear systems \cite{proskurnikov2017simple} that are not passive. Inspired by the passivity theorem, the feedback interconnection of two systems is passive if the passivity deficit of one system can be compensated by the passivity surplus of the other \cite{van2000l2}. In the network setting, passivity shortages can then be compensated locally via the coupling links, contributing to the overall network passivity \cite{su2025passivitycompensationdistributedapproach}. This passivity-compensation mechanism provides a natural basis for plug-and-play design: when subnetworks are joined, local passivity shortages at the boundary nodes can be compensated through the corresponding coupling links, and only the interface conditions need to be verified to ensure the enlarged network preserves consensus.

This paper analyses input–output (IO) consensus in heterogeneous networks with nonlinear diffusive couplings subject to measurement and communication noise. Building on a passivity-compensation-based framework, we establish locally verifiable conditions, expressed in terms of passivity indices and coupling gains, that ensure consensus properties are preserved under plug-and-play operations. In particular, we address both system-level and network-level plug-and-play scenarios, showing that consensus in the enlarged network can be guaranteed by checking only interface conditions at the boundary nodes, without the need for global reanalysis.

\section{Notation and Preliminaries}
We begin by introducing the notation. Let $\mathbb{R}$ be the set of real numbers. Given a matrix $A$, let $A^\top$ denote its transpose, and the notation $A\succ 0$ indicates that $A$ is positive definite. Let $\textbf{1}_m := [1, \ldots ,1]^{\top} \in {\mathbb{R}^m}$ and $\textbf{0}_m := [0, \ldots ,0]^{\top} \in {\mathbb{R}^m}$. Denote ${\rm col}\left( {{a_1}, \ldots ,{a_m}} \right) := {\left[ {a_1, \ldots ,a_m} \right]^{\top}}$ as the column vector with scalars ${{a_1}, \ldots ,{a_m}}$, and let $\mathrm{diag}\{v_1,\dots, v_m\}$ denote the diagonal matrix with diagonal entries $v_1, \ldots, v_m$. Denote by $\Ltwo$ the space of signals $x:\left[ {0,\infty } \right) \to {\mathbb{R}^m}$ satisfying $\int_0^\infty  {{{\left| {x(t)} \right|}^2}dt < \infty}$ with $|\cdot|$ being the Euclidean norm. Define $\Ltwoe: =\{ {x:\left[ {0,\infty } \right) \to {\mathbb{R}^m}\;|\;{P_T}x \in \Ltwo ,\forall T \ge 0} \}$, where $P_T$ is the truncation operator that satisfies $\left( {{P_T}x} \right)(t) = x(t)$ for $t \le T$ and $\left( {{P_T}x} \right)(t) = 0$ for $t>T$. For $x,y \in \Ltwoe $ and $T \ge 0$, ${\left\| x \right\|_T} := {\left( {\int_0^T {{{\left| {x(t)} \right|}^2}dt} } \right)^{\frac{1}{2}}}$ and ${\left\langle {x,y} \right\rangle _T} := \int_0^T {{x^{\top}}(t)y(t)dt}$. An operator $H:\Ltwoe \to \Ltwoe$ is said to be causal if ${P_T}H{P_T} = {P_T}H$ for all $T \ge 0$.

\vspace{2mm}
We now recall the concept of input feedforward passivity (IFP), which generalises passivity by 
quantifying surpluses or shortages of passivity. 

\begin{definition}[\cite{DesVid75}]\label{def: IFP}
A causal operator $H:\Ltwoe \to \Ltwoe $ is said to be input feedforward passive (IFP) if there exist $\nu \in \mathbb{R}$ and $\delta \in \mathbb{R}$ such that
\begin{align}\label{eq: IFP}
 {\left\langle {u,Hu } \right\rangle _T}\ge \nu\left\| {u} \right\|_T^2 + \delta,\,\forall u \in \Ltwoe,\,\forall T \ge 0.
\end{align}
\end{definition}

The positive (or negative) sign of passivity index $\nu_i$ reflects a surplus (or shortage) of passivity. 

\vspace{2mm}

The interconnection structure of a network is represented by an undirected graph $\mathcal{G} = (\mathcal{N},\mathcal{E} )$, where $\mathcal{N} = \{ 1, \ldots ,n\} $ is the node set and $\mathcal{E} \subset \mathcal{N} \times \mathcal{N}$ is the edge set. An edge $(i,j) \in \mathcal{E}$ indicates that node $i$ can receive information from node $j$. For each node $i\in\mathcal{N}$, define the set of its neighbouring nodes as $\mathcal{N}_i=\left\{ {j \in \mathcal{N}\;|\;(i,j) \in \mathcal{E}} \right\}$. The graph $\mathcal{G}$ is said to be undirected if the existence of an edge $(i, j) \in \mathcal{E}$ implies that $(j, i) \in \mathcal{E}$. An undirected graph is said to be connected if there exists a sequence of edges connecting any two nodes. The structure of a graph can be described by its incidence matrix $D=[d_{ik}]\in\mathbb{R}^{n\times p}$ with $p$ being the cardinality of $\mathcal{E}$. For an undirected graph $\mathcal{G}$, the ends of each edge $k$ are assigned a ``$+$'' and a ``$-$'' arbitrarily, and then we can denote by $\mathscr{L}_i^+$ and $\mathscr{L}_i^-$ the set of edges for which node $i$ is the positive end and negative end, respectively. The incidence matrix is then given by
\begin{equation*}
{d_{ik}} = \left\{ 
\begin{matrix}
     + 1, & k \in \mathscr{L}_i^ + \\
 - 1, & k \in \mathscr{L}_i^ - \\
0,& \mathrm{otherwise}.
\end{matrix}
\right.
\end{equation*}
By definition, $D^{\top} \textbf{1}_n =0$ when the graph $\mathcal{G}$ is undirected. In the following, we establish a lemma on matrix positive definiteness by exploiting the properties of the incidence matrix, which plays a crucial role in the subsequent analysis of output consensus.\par
\begin{lemma}\label{lem: positive definite}
Given an undirected and connected graph $\mathcal{G} = (\mathcal{N}, \mathcal{E})$, let  $D\in\mathbb{R}^{n\times p}$  be  its incident matrix,  and let  $\Theta =\mathrm{diag}\{\theta_1,\ldots,\theta_n\}$ and $\Sigma =\mathrm{diag}\{\sigma_1,\ldots,\sigma_p\}$.
It holds that  
\begin{align*}
    M := D^{\top}\Theta D+\Sigma \succ 0
\end{align*}
if there exists a diagonal matrix $S=\mathrm{diag}\left\{ {{s _1}, \ldots, s_p} \right\}\succ 0$ such that for each edge $\left( {i,j} \right)\in \mathcal{E}$, 
\begin{align}\label{eq: lemma}
{s_k}\left( {{\theta _i} + {\theta _j} + {\sigma_k}} \right) \ge \sum\limits_{{l\neq k},\,{l \in \mathscr{L}_i^ \pm }} {{s_l}\left| {{\theta _i}} \right|}  + \sum\limits_{{l\neq k},\,{l \in \mathscr{L}_j^ \pm }} {{s_l}\left| {{\theta _j}} \right|},
\end{align}
where $k \in \mathscr{L}_i^ + \cap \mathscr{L}_j^ -$ and $\mathscr{L}_i^ \pm:=\mathscr{L}_i^ + \cup \mathscr{L}_i^-$.\par
The proof of Lemma~\ref{lem: positive definite} is provided in Appendix.
\end{lemma}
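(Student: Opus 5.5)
We will prove $M\succ 0$ by showing that the scaled matrix $SM$ (equivalently $MS$) is diagonally dominant in a weighted sense. Concretely, we use the standard fact that if $S\succ 0$ is diagonal and $MS$ is strictly (or irreducibly) column diagonally dominant with positive diagonal, then the symmetric matrix $M$ is positive definite. Since $M$ is symmetric, $MS$ and $S^{1/2}MS^{1/2}$ are related by a similarity ($MS = S^{-1/2}(S^{1/2}MS^{1/2})S^{1/2}$), so their eigenvalues coincide. A Gershgorin argument applied to the columns of $MS$ then places all eigenvalues of $M$'s congruent form in the right half-plane. This is the usual scaled diagonal dominance (H-matrix) route.

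The first step is to compute the entries of $M=D^\top\Theta D+\Sigma$ explicitly from the incidence structure. For an edge $k=(i,j)$ with $k\in\mathscr{L}_i^+\cap\mathscr{L}_j^-$ we have $M_{kk}=\theta_i d_{ik}^2+\theta_j d_{jk}^2+\sigma_k=\theta_i+\theta_j+\sigma_k$. For $l\neq k$, $M_{kl}=\sum_m \theta_m d_{mk}d_{ml}$, which is nonzero only when edges $k$ and $l$ share an endpoint. In a simple graph two distinct edges share at most one node, so $|M_{kl}|=|\theta_i|$ if they share $i$, $|M_{kl}|=|\theta_j|$ if they share $j$, and $0$ otherwise. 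Hence, taking the column of $MS$ (or the row of $SM$) indexed by $k$, the condition \eqref{eq: lemma} reads exactly
\begin{align*}
s_k M_{kk}\ \ge\ \sum_{l\neq k} s_l |M_{lk}|,
\end{align*}
that is, weighted diagonal dominance of $MS$ along columns.

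The second step converts dominance into positive semidefiniteness. We bound $x^\top Mx$ using $2|x_kx_l|\le (s_l/s_k)x_k^2+(s_k/s_l)x_l^2$ applied to each off-diagonal term:
\begin{align*}
x^\top M x\ \ge\ \sum_k x_k^2\Big(M_{kk}-\sum_{l\neq k}\tfrac{s_l}{s_k}|M_{kl}|\Big)\ \ge\ 0 .
\end{align*}
This also follows from Gershgorin applied to $S^{-1}MS$. Either way it gives $M\succeq 0$.

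The main obstacle is strictness, because \eqref{eq: lemma} is only non-strict. To upgrade to $M\succ 0$, we will argue that if $x^\top Mx=0$, then every inequality above is tight. In particular $Dx$-type terms must vanish: the quantity $x^\top D^\top\Theta Dx+x^\top\Sigma x$ equalling $0$ forces the AM-GM equalities $|x_k|/s_k=|x_l|/s_l$ along adjacent edges, with consistent signs. We then expect to use connectedness (the line graph of a connected graph is connected) to propagate this to a global constraint, and then show it contradicts the diagonal term structure. A cleaner route we would try first is the Taussky-type irreducible diagonal dominance theorem. Connectedness makes $MS$ irreducible whenever the relevant off-diagonal $\theta$'s are nonzero, so it suffices to show that strict inequality holds in at least one column. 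Some care is needed here. If that is not available, we would need the implicit extra assumption (strict inequality for some edge, or some $\sigma_k>0$), and we would flag it as the delicate point of the argument.
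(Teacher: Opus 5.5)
Your entry computation and the weighted bound $x^\top Mx\ge\sum_k x_k^2\bigl(M_{kk}-\sum_{l\neq k}\tfrac{s_l}{s_k}|M_{kl}|\bigr)$ are correct, so you do obtain $M\succeq 0$. There is one small slip: \eqref{eq: lemma} is row dominance of $MS$, equivalently column dominance of $SM$. Column $k$ of $MS$ is just $s_k$ times column $k$ of $M$, so no weights enter there.

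The genuine gap is the step from $\succeq$ to $\succ$, and it cannot be closed, because with the non-strict inequality the lemma is false. Take any connected graph with $\Theta=0$ and $\Sigma=0$. Then $M=0$, yet \eqref{eq: lemma} holds with equality for every $S\succ 0$. So analysing the equality cases of $x^\top Mx=0$ can never yield a contradiction, and Taussky's theorem has no strict row to work with.

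The extra hypotheses you float are also not the right ones. Requiring some $\sigma_k>0$ does not help: one edge with $\theta_1=\theta_2=-\tfrac12$ and $\sigma_1=1$ still gives $M=0$. Strict inequality on a single edge fails once a zero $\theta_i$ at an interior node breaks irreducibility. For example, the path $1$--$2$--$3$ with $\theta_1=1$ and all other $\theta_i,\sigma_k$ equal to $0$ satisfies \eqref{eq: lemma}, strictly on edge $(1,2)$, but $M=\mathrm{diag}\{1,0\}$.

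The repair is to demand strict inequality in \eqref{eq: lemma} for every edge. This is in fact what the paper's Appendix uses. Its Gershgorin step on the congruent matrix $SMS$ is written with $\bar m_{kk}>\sum_{l\neq k}|\bar m_{kl}|$ despite the $\ge$ in the statement, and its closing ``closed right half-plane'' remark would by itself give only semidefiniteness. Moreover, every condition fed into the lemma in Theorems~\ref{thm: add a system} and~\ref{thm: network to network} is strict.

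Under the strict hypothesis your Step 2 already finishes the proof. You get $x^\top Mx\ge c\,|x|^2$ with $c:=\min_k\bigl(M_{kk}-\sum_{l\neq k}\tfrac{s_l}{s_k}|M_{kl}|\bigr)>0$, and no irreducibility or Taussky argument is needed. Your route (weighted AM--GM on the quadratic form, or Gershgorin on the similar matrix $S^{-1}MS$) is then equivalent to the paper's Gershgorin argument on $SMS$.
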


\section{Problem description}
The problem studied in this work is set within a plug-and-play framework. Our objective is to establish local conditions that guarantee consensus is preserved when one network is interconnected with another. To formalize this setting, we first introduce the notion of consensus for a fixed undirected network.

Consider a network of $n$ systems described by a graph $\mathcal{G} = (\mathcal{N},\mathcal{E} )$, where each node $i\in\mathcal{N}$ corresponds to a system $H_i:\Ltwoe \to\Ltwoe$ defined by
\begin{equation}\label{eq: system model}
    {y_i} = {H_i}{u_i}, \,i\in\{1,2,\ldots,n\},
\end{equation}
where $u_i, y_i \in \Ltwoe$ are the input and output of the $i$-th system, respectively. In this paper, we consider an undirected and connected graph $\mathcal{G}$, where the systems are diffusively coupled over static nonlinear operators ${\phi_{ij}}: \Ltwoe \to \Ltwoe$, $\left( {i,j} \right)\in \mathcal{E}$ that map $0$ to $0$, and satisfy the sector-boundness: $0 < \underline{\alpha}_{ij}  \le\frac{\left( {{\phi_{ij}}\left( x \right)} \right)\left( t \right)}{{x\left( t \right)}} \le \overline{\alpha}_{ij} < \infty$ for all $x(t)\neq 0$ and  odd symmetry $\left( {{\phi _{ji}}\left( -x \right)} \right)\left( t \right) =  - \left( {{\phi _{ij}}\left( x \right)} \right)\left( t \right)$. To be specific, the input $u_i, i\in \mathcal{N}$ is expressed as
\begin{equation}\label{eq: input} 
u_i = -\sum\limits_{j \in \mathcal{N}_i} {{\phi_{ij}}\left( {y_i + w_i - y_j - w_j} \right)}, 
\end{equation}
where $w_i \in \Ltwoe, i \in \{1,2,\ldots,n\}$ are external signals that can represent measurement noise in the $i$-th system,  or $(w_i - w_j) \in \Ltwoe$ can model the communication noise present in the link between the $i$-th and $j$-th systems.

Let $U := {\rm col}\left( {{u_1}, \ldots ,{u_n}} \right)$, $Y := {\rm col}\left( {{y_1}, \ldots ,{y_n}} \right)$ and $W := {\rm col}\left( {{w_1}, \ldots ,{w_n}} \right)$. It follows from the definition of $D$ that \eqref{eq: input} can be rewritten into 
\begin{equation}\label{eq: iuput vector}
U = - D\Phi\left(D^{\top} \left( {Y + W} \right) \right), 
\end{equation}
where $\Phi : \Ltwoe \to \Ltwoe$ is defined as $\Phi\left(X\right):= {\rm col}\left( {{\phi _1}\left( {{x_1}} \right), \ldots ,{\phi _p}\left( {{x_p}} \right)} \right)$ with $X := {\rm col}\left( {{x_1}, \ldots ,{x_p}} \right)\in \Ltwoe$. Here, each component $\phi_k$ corresponds to the nonlinear map associated with the $k$-th edge, with ${\phi _k}\left(  \cdot  \right) = {\phi _{ij}}\left(  \cdot  \right)$ if $d_{ik}=1$ and $d_{jk}=-1$. The block diagram illustrating the heterogeneous network described by \eqref{eq: system model} and \eqref{eq: input} is shown in Fig. \ref{fig.network}. 
\begin{figure}
\centering
\includegraphics[width=7cm]{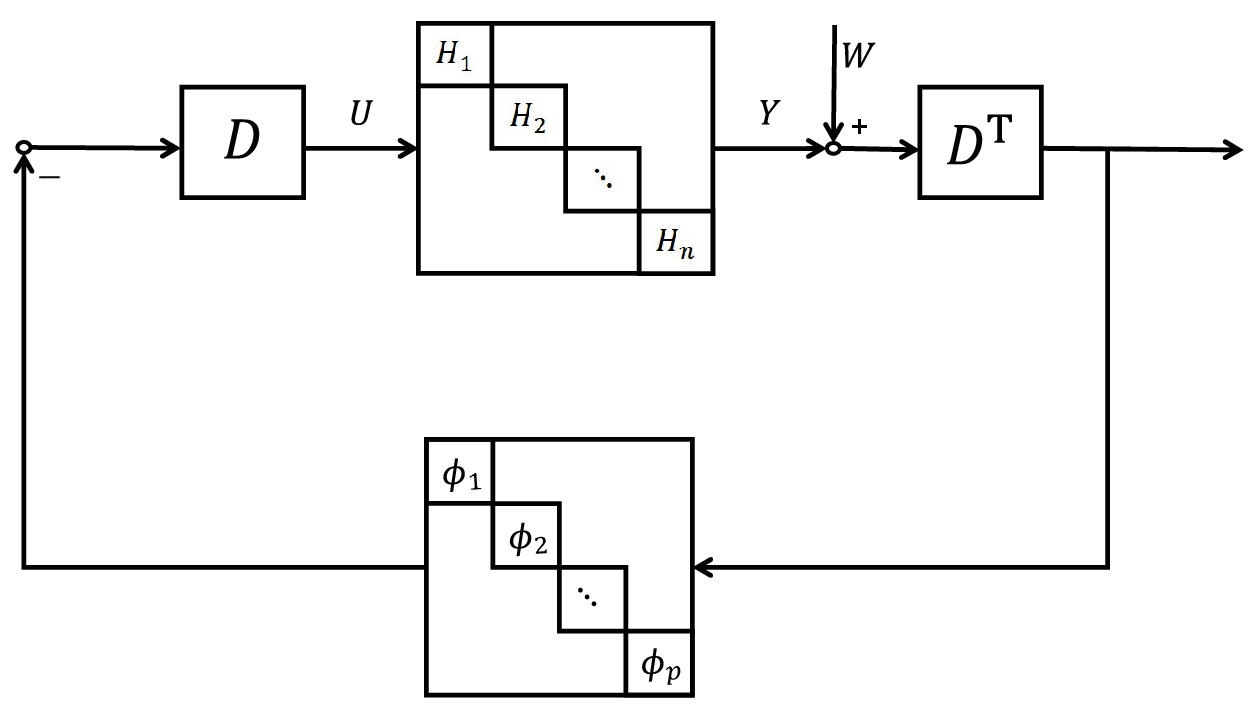}
\caption{Block diagram of the network \eqref{eq: system model} and \eqref{eq: input}.}
\label{fig.network}
\end{figure}
\vspace{2mm}
\begin{definition}\label{def: IO consensus}
The network described by \eqref{eq: system model} and \eqref{eq: input} is said to achieve input-output consensus if there exist a finite gain $\rho>0$ and a constant $\sigma\ge 0$ such that
\begin{align}\label{eq:aim}
    {\left\| {D^{\top} Y} \right\|_T}\le \rho{\left\| {D^{\top} W} \right\|_T}+\sigma,\,\forall {W } \in \Ltwoe,\,\forall T \ge 0.
\end{align}
\end{definition}

\vspace{2mm}

The term $\sigma$ in \eqref{eq:aim} allows the transient behaviour that may occur due to initial mismatches among the systems’ outputs prior to reaching consensus. Recalling the definition of incidence matrix $D$, ${\left\| {D^{\top} Y} \right\|_T}$ quantifies the mismatches among the outputs of systems in the time interval $[0,T]$, and can thus be interpreted as a measure of output synchrony on $[0,T]$. Consequently, condition \eqref{eq:aim} ensures that if the external inputs are closely aligned, the corresponding outputs will exhibit a scaled but similar level of consensus.

\vspace{2mm}

We consider a network of heterogeneous  systems \eqref{eq: system model}, each characterisable by an input passivity index and interconnected through nonlinear diffusive couplings \eqref{eq: input}. Our objective is to analyse their output consensus behaviour within a plug-and-play framework. Specifically, given two disjoint dynamical networks, we aim to derive \emph{locally verifiable interface conditions} under which the interconnected larger network preserves the consensus properties of the individual subnetworks.
To this end, we employ a passivity-compensation approach: any passivity shortage in a system can be \emph{locally} compensated through its coupling links, which contribute to the overall network passivity and hence ensure consensus. This local compensation mechanism can then be systematically exploited in the plug-and-play interface design to guarantee that consensus is maintained when subnetworks are interconnected.


\section{Main Results}

In Section \ref{sec: Standard plug-and-play consensus}, we develop the main result  in the standard plug-and play setting in which a single system joins a network. The  analysis is extended  in Section \ref{sec: Network Plug-and-play consensus} to the more general case where a subnetwork is plugged into an existing network.

\subsection{Single-node plug-and-play consensus}\label{sec: Standard plug-and-play consensus}
Consider the scenario where a new system $H_{n+1}$ is added to the network described by \eqref{eq: system model} and  \eqref{eq: input}. For simplicity, we assume $H_{n+1}$ is  connected to the existing graph $\mathcal{G}$ through a single edge. 
Let $\mathcal{G}_{new}$ denotes the augmented graph obtained when the new system $H_{n+1}$ is connected to $\mathcal{G}$. Compared to $\mathcal{G}$, $\mathcal{G}_{new}$ contains an additional node and an additional edge. 

For each node $i\in\mathcal{N}$, let $r_i$ denote the number of neighbours of system $H_i$. The following theorem proposes local conditions on the added node and its connecting edge that ensure IO consensus of the network is preserved after the ``plug and play".  

\vspace{2mm}

\begin{theorem}\label{thm: add a system}
Consider a new system $H_{n+1}$ joining the network described by \eqref{eq: system model} and \eqref{eq: input} by establishing a connection with some system $H_c,\, c\in\{1,\ldots,n\}$, through a sector-bounded coupling operator ${\phi_{(n+1)c}}(\cdot)$. Suppose that all systems in the original network are IFP with indices $\nu_i,i\in\{1,\ldots,n\}$, and that the following condition is satisfied for each edge $(i,j)\in\mathcal{E}$ in $\mathcal{G}$
\begin{align}\label{eq: G}
\frac{1}{{{{\overline{\alpha} }_{ij}}}} + {\nu _i} + {\nu _j} - \left( {{r_i} - 1} \right)\left| {{\nu _i}} \right| - \left( {{r_j} - 1} \right)\left| {{\nu _j}} \right| > 0. 
\end{align}
Then, the original network described by \eqref{eq: system model} and \eqref{eq: input} achieves IO consensus. Moreover, under condition \eqref{eq: G},  IO consensus is preserved in the augmented network  if $H_{n+1}$ is IFP with index $\nu_{n+1}$, and
\begin{align}\label{eq: n+1}
    \gamma\left( {\frac{1}{{{{\overline{\alpha} }_{(n + 1)c}}}} + {\nu _{n + 1}} + {\nu _c}} \right) - r_c\left| {{\nu _c}} \right| > 0,
\end{align}
where
\begin{align*}
    \gamma:=\min\limits_{j \in {\mathcal{N}_c}} \frac{{\frac{1}{{{{\overline{\alpha} }_{cj}}}} + {\nu _c} + {\nu _j} - \left( {{r_c} - 1} \right)\left| {{\nu _c}} \right| - \left( {{r_j} - 1} \right)\left| {{\nu _j}} \right|}}{{\left| {{\nu _c}} \right|}},
\end{align*}
and ${{{\overline{\alpha} }_{(n+1)c}}}$ is the upper sector bound of ${\phi _{(n+1)c}}(\cdot)$.
\end{theorem}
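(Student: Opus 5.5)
The argument has two parts. First, reduce IO consensus to the positive definiteness of a matrix of the form $D^\top\Theta D+\Sigma$. Then verify that positive definiteness through Lemma~\ref{lem: positive definite}, first with $S=I$ for the original network and then with a suitably weighted $S$ for the augmented one.

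\emph{Step 1 (passivity inequality).} Set $E:=D^\top(Y+W)$ and $Z:=\Phi(E)$, so that $U=-DZ$. Summing the IFP inequalities \eqref{eq: IFP} over the nodes gives
\[
\langle U,Y\rangle_T\ge \langle U,\mathrm{diag}\{\nu_i\}U\rangle_T+\delta .
\]
Substituting $U=-DZ$ turns this into
\[
-\langle Z,D^\top Y\rangle_T\ge \langle Z,D^\top\Theta D Z\rangle_T+\delta, \qquad \Theta=\mathrm{diag}\{\nu_i\}.
\]
The sector bound with $\underline\alpha>0$ yields, componentwise, $\langle z_k,e_k\rangle_T\ge \frac{1}{\overline\alpha_k}\|z_k\|_T^2$. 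Writing $D^\top Y=E-D^\top W$ and combining, I get
\[
\langle Z,MZ\rangle_T\le \langle Z,D^\top W\rangle_T-\delta, \qquad M:=D^\top\Theta D+\Sigma,\quad \Sigma=\mathrm{diag}\{1/\overline\alpha_k\}.
\]
If $M\succ0$ with smallest eigenvalue $\lambda>0$, Cauchy--Schwarz gives $\lambda\|Z\|_T^2\le\|Z\|_T\|D^\top W\|_T+|\delta|$. Hence $\|Z\|_T\le c_1\|D^\top W\|_T+c_2$.

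\emph{Step 2 (from $Z$ to $D^\top Y$).} The sector condition gives $|e_k|\le |z_k|/\underline\alpha_k$, so $\|E\|_T\le\|Z\|_T/\underline\alpha$. Then $\|D^\top Y\|_T\le\|E\|_T+\|D^\top W\|_T$, which is exactly \eqref{eq:aim}. Consensus therefore reduces to $M\succ0$.

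\emph{Step 3 (original network).} Apply Lemma~\ref{lem: positive definite} with $S=I$, $\theta_i=\nu_i$ and $\sigma_k=1/\overline\alpha_{ij}$. Node $i$ has $r_i-1$ edges other than $k$, so \eqref{eq: lemma} reads precisely as \eqref{eq: G}.

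\emph{Step 4 (augmented network).} The new edge $k^*$ joins $n+1$, which has degree one, to $c$, whose degree becomes $r_c+1$. Keep $s_l=1$ on all old edges and choose $s_{k^*}=s$, to be determined. The constraints are as follows.
\begin{itemize}
\item Old edges not incident to $c$: the inequalities are unchanged and hold by \eqref{eq: G}.
\item Old edges $(c,j)$: they acquire an extra term $s|\nu_c|$ on the right. They therefore require $g_{cj}\ge s|\nu_c|$, where $g_{cj}$ denotes the left side of \eqref{eq: G} for edge $(c,j)$. Equivalently, $s\le\gamma$.
\item The new edge: node $n+1$ contributes no other edges, so the condition is $s\bigl(1/\overline\alpha_{(n+1)c}+\nu_{n+1}+\nu_c\bigr)\ge r_c|\nu_c|$.
\end{itemize}
Choosing $s=\gamma$ satisfies the bound for the old edges at $c$, and the new-edge condition becomes exactly \eqref{eq: n+1}. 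If $\nu_c=0$, then $\gamma$ is undefined or infinite, but this case is trivial: any large $s$ works as long as $1/\overline\alpha+\nu_{n+1}+\nu_c>0$, which is implicit. The graph stays connected, so Lemma~\ref{lem: positive definite} gives $M_{new}\succ0$, and Steps 1--2 conclude.

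The main obstacle is Step 1. It requires handling the cross term $\langle Z,D^\top W\rangle$ and using the odd symmetry so that $\phi_k$ is well-defined per edge regardless of orientation. It also needs the sector inequality in integrated form, which holds pointwise since $\phi_k$ is static. The plug-and-play part is a bookkeeping exercise on the weights $s_l$.
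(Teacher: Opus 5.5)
Your proposal is correct and follows the paper's proof essentially step for step. You sum the IFP and sector inequalities to get $\langle Z,(D^\top\Psi D+\Lambda)Z\rangle_T\le\langle Z,D^\top W\rangle_T-\delta$, pass back to $D^\top Y$ via the lower sector bound, and certify positive definiteness through Lemma~\ref{lem: positive definite}, with $S=I$ for the original network and weight $\gamma$ on the new edge for the augmented one.

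One small point, which the paper's proof shares. With $s=\gamma$, the inequality for the minimising edge $(c,j)$ holds only with equality, and Gershgorin with non-strict row dominance yields only $M_{\mathrm{new}}\succeq 0$. Take $s=\gamma-\varepsilon$ with $\varepsilon>0$ small instead; this is allowed because $\gamma>0$ and \eqref{eq: n+1} is strict, and it makes every row strictly dominant.
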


\begin{proof}
First, we show that if \eqref{eq: G} holds, the IO consensus is achieved in the original network. To this end, let $V:=\Phi\left(D^{\top} \left( {Y + W} \right) \right) $. For all $U\in \Ltwoe$, it follow from Definition \ref{def: IFP} that
\begin{align}\label{eq: U,Y}
{\left\langle {V, - {D^{\top}}Y} \right\rangle _T}&={\left\langle {-DV, Y} \right\rangle _T} = {\left\langle {U,Y} \right\rangle _T} \nonumber\\
&\ge {\nu _1}\left\| {{u_1}} \right\|_T^2 + \delta_1 +  \cdots {\nu _n}\left\| {{u_n}} \right\|_T^2 +\delta_n\nonumber\\
&= {\left\langle {U,\Psi U} \right\rangle _T}+\bar \delta\nonumber\\
&={\left\langle {V, {{D^{\top}}\Psi D}V} \right\rangle _T}+\bar \delta.
\end{align}
where $\Psi : =\mathrm{diag}\left\{ {{\nu _1}, \dots, {\nu _n}} \right\}$ and $\bar \delta : = \sum\limits_{i = 1}^n {{\delta _i}} $. On the other hand, one has
\begin{align}\label{eq: V,Y}
& \, {\left\langle {V,{D^{\top}}\left( {Y + W} \right)} \right\rangle _T} = {\left\langle {\Phi\left(D^{\top} \left( {Y + W} \right) \right),{D^{\top}}\left( {Y + W} \right)} \right\rangle _T} \nonumber\\
&\ge {\left\langle {\Phi\left(D^{\top} \left( {Y + W} \right) \right),\Lambda \Phi\left(D^{\top} \left( {Y + W} \right) \right)} \right\rangle _T}= {\left\langle {V,\Lambda V} \right\rangle _T},
\end{align}
where $\Lambda:=\mathrm{diag}\{ {\alpha _1}, \ldots ,{\alpha _p}\}$ with ${\alpha _k} = \frac{1}{{{{\overline{\alpha} }_{ij}}}}$ if $d_{ik}=1$ and $d_{jk}=-1$, and the inequality holds due to  ${\left\langle {x,{\phi_{ij}}\left( x \right)} \right\rangle _T}\ge\frac{1}{{{\overline{\alpha}_{ij}}}}\left\| {{\phi _{ij}}\left( x \right)} \right\|_T^2$. It follows from \eqref{eq: U,Y} and \eqref{eq: V,Y} that
\begin{align}\label{eq: V,W}
{\left\langle {V,{D^{\top}}W} \right\rangle _T} &= {\left\langle {V, - {D^{\top}}Y} \right\rangle _T} + {\left\langle {V,{D^{\top}}\left( {Y + W} \right)} \right\rangle _T}\nonumber\\
&\ge {\left\langle {V,\left( {{D^{\top}}\Psi D + \Lambda } \right)V} \right\rangle _T}+\bar \delta.
\end{align}
By hypothesis, $\frac{1}{{{{\overline{\alpha} }_{ij}}}} + {\nu _i} + {\nu _j} - \left( {{r_i} - 1} \right)\left| {{\nu _i}} \right| - \left( {{r_j} - 1} \right)\left| {{\nu _j}} \right| > 0$ for all $ (i,j)\in\mathcal{E}$. According to Lemma \ref{lem: positive definite} and let $s_k = 1, k \in \{1,\dots,p\}$, ${D^{\top}}\Psi D + \Lambda \succ 0$. Let $\kappa >0$ be the smallest eigenvalue of ${D^{\top}}\Psi D + \Lambda$. By \eqref{eq: V,W} as well as the fact that $ab \le \frac{{{\kappa}}}{2}{a^2} +\frac{1}{{2{\kappa}}}{b^2}$, we have
\begin{align*}
\kappa \left\| V \right\|_T^2   
& \le {\left\langle {V,{D^{\top}}W} \right\rangle _T} - \bar \delta\\
& \le \frac{\kappa }{2}\left\| V \right\|_T^2 + \frac{1}{{2\kappa}}\left\| {{D^{\top}}W} \right\|_T^2 - \bar \delta,
\end{align*}
which yields that
$$\left\| V \right\|_T^2 \le \frac{1}{{{\kappa^2}}}\left\| {{D^{\top}}W} \right\|_T^2 - \frac{{2\bar \delta }}{\kappa }.$$
It follows from ${a^2} \pm {b^2} \le  {\left( {\left| a \right| + \left| b \right|} \right)^2}$ that ${\left\| V \right\|_T} \le \frac{1}{\kappa}{\left\| {{D^{\top}}W} \right\|_T} + \sqrt {\frac{{2\left| {\bar \delta } \right|}}{\kappa }}.$ Since ${\left\| {V} \right\|_T} \ge \underline{\alpha} {\left\| {{D^{\top}}\left( {Y + W} \right)} \right\|_T}$ with $\underline{\alpha} := \mathop {\min }\limits_{(i,j) \in \mathcal{E}} {\underline{\alpha}_{ij}}$ and ${\underline{\alpha}_{ij}}$ is the lower sector bound of $\phi_{ij}(\cdot)$, it can be obtained from the fact $\left| {a + b} \right| \ge \left| a \right| - \left| b \right|$ that
$${\left\| {{D^{\top}}Y} \right\|_T} \le \left( {\frac{1}{{\kappa \underline{\alpha} }} + 1} \right){\left\| {{D^{\top}}W} \right\|_T} + \frac{1}{{\underline{\alpha} }} \sqrt {\frac{{2\left| {\bar \delta } \right|}}{\kappa }}.$$
Noting that the above reasoning holds for all $T>0$,  the original network achieves IO consensus.\par

Next, we examine the IO consensus of the augmented network that results from connecting the additional system $H_{n+1}$ to the original network. For the augmented graph $\mathcal{G}_{\mathrm{new}}$, denote by $\mathscr{\bar L}_i^+$ and $\mathscr{\bar L}_i^-$ the set of edges for which node $i$ is the positive end and negative end, respectively, and let $\mathscr{\bar L}_i^ \pm:=\mathscr{\bar L}_i^+ \cup \mathscr{\bar L}_i^-$. Let $D_{\mathrm{new}} = [d_{ik}^{\mathrm{new}}] \in \mathbb{R}^{(n+1) \times (p+1)}$ be the incidence matrix associated with $\mathcal{G}_{\mathrm{new}}$. The incidence matrix $D_{\mathrm{new}}$ can be set as $D_{\mathrm{new}} = \TwoTwo{D}{\Upsilon}{\textbf{0}_n^\top}{1}$, where $\Upsilon : = [\upsilon_i] \in \mathbb{R}^{n \times 1}$ with $\upsilon_i = -1$ if $i=c$, otherwise $\upsilon_i = 0$. Let $\Psi_{\mathrm{new}}  := \mathrm{diag}\left\{ {\Psi, {\nu _{n+1}}} \right\}$ with $\Psi  =\mathrm{diag}\left\{ {{\nu _1}, \dots, {\nu _n}} \right\}$ and $\Lambda_{\mathrm{new}} : = \mathrm{diag}\left\{ {\Lambda, \frac{1}{{{{\overline{\alpha} }_{(n + 1)c}}}}} \right\}$ with $\Lambda=\mathrm{diag}\{ {\alpha _1}, \ldots ,{\alpha _p}\}$. 

To determine the positive definiteness of matrix $D_{\mathrm{new}}^{\top}\Psi_{\mathrm{new}}D_{\mathrm{new}}+\Lambda_{\mathrm{new}}\in\mathbb{R}^{(p+1) \times (p+1)}$, we first set $S = \mathrm{diag}\{s_1,\dots,s_{p+1}\}$ with $s_k = 1, k\in\{1,\dots,p\}$ and $s_{p+1} = \gamma > 0$. Noting that $p+1\in \mathscr{\bar L}_{n+1}^ + \cap \mathscr{\bar L}_c^ -$,
\begin{align*}
&\sum\limits_{l\neq p+1,\,l \in \mathscr{\bar L}_c^ \pm } {{s_l}\left| {{\nu _c}} \right|} = r_c\left| {{\nu _c}} \right|,\,\sum\limits_{l\neq p+1,\,l \in \mathscr{\bar L}_{n+1}^ \pm } {{s_l}\left| {{\nu _{n+1}}} \right|} = 0,\\
&\sum\limits_{l\neq k,\,l \in \mathscr{\bar L}_j^ \pm } {{s_l}\left| {{\nu _j}} \right|} = (r_j-1)\left| {{\nu _j}} \right|,\, k \in \mathscr{\bar L}_j^ \pm,\forall j\neq c,n+1,\\
&\sum\limits_{l\neq k,\,l \in \mathscr{\bar L}_c^ \pm } {{s_l}\left| {{\nu _c}} \right|} = (r_c-1+\gamma)\left| {{\nu _c}} \right|,\, k \in \mathscr{\bar L}_c^ \pm ,k\neq p+1.
\end{align*}
It follows from Lemma \ref{lem: positive definite} that $D_{\mathrm{new}}^{\top}\Psi_{\mathrm{new}}D_{\mathrm{new}}+\Lambda_{\mathrm{new}}$ is positive definite if the following conditions hold: 1) $\gamma\left( {\frac{1}{{{{\overline{\alpha} }_{(n + 1)c}}}} + {\nu _{n + 1}} + {\nu _c}} \right) > r_c\left| {{\nu _c}} \right|$; 2) $\frac{1}{{{{\overline{\alpha} }_{cj}}}} + {\nu _c} + {\nu _j} > \left( {{r_c} - 1} \right)\left| {{\nu _c}} \right| + \left( {{r_j} - 1} \right)\left| {{\nu _j}} \right|+\gamma\left| {{\nu _c}} \right|$ for all $(c,j)\in\mathcal{E}$; 3) $\frac{1}{{{{\overline{\alpha} }_{ij}}}} + {\nu _i} + {\nu _j} > \left( {{r_i} - 1} \right)\left| {{\nu _i}} \right| + \left( {{r_j} - 1} \right)\left| {{\nu _j}} \right|$ for all $(i,j)\in\mathcal{E}$ and $i,j \neq c$. Consequently, if \eqref{eq: G} and \eqref{eq: n+1} hold, the aforementioned three conditions are simultaneously satisfied, leading to the conclusion that $D_{\mathrm{new}}^{\top}\Psi_{\mathrm{new}}D_{\mathrm{new}}+\Lambda_{\mathrm{new}}$ is positive definite. Following the same reasoning line with the IO consensus analysis for the original network, it can be obtained that there exist a finite gain $\rho>0$ and a constant $\sigma\ge 0$ such that
\begin{align*}
    {\left\| {D_{\mathrm{new}}^{\top} Y_{\mathrm{new}}} \right\|_T}\le \rho{\left\| {D_{\mathrm{new}}^{\top} W_{\mathrm{new}}} \right\|_T}+\sigma,\,\forall T \ge 0,
\end{align*}
where $Y_{\mathrm{new}}: = \left[ {\begin{array}{*{20}{c}}
Y\\
{{y_{n + 1}}}
\end{array}} \right]$ and $W_{\mathrm{new}} :=  \left[ {\begin{array}{*{20}{c}}
W\\
{{w_{n + 1}}}
\end{array}} \right]$. 
\end{proof}

\vspace{2mm}

\begin{remark}
Theorem~\ref{thm: add a system} provides a practical criterion for preserving IO consensus in a plug-and-play setting. Specifically, when a new IFP system is connected to an existing network in which all edges satisfy the distributed condition~\eqref{eq: G}, it is sufficient to select the coupling at the interface such that condition~\eqref{eq: n+1} holds. In practice, this requires the node $H_c$, to which the new system is connected, to have knowledge of its own passivity index, together with the passivity indices, the number of neighbours, and the associated coupling functions of its neighbours.
\end{remark}

\subsection{Network plug-and-play consensus}\label{sec: Network Plug-and-play consensus}

Next, we extend the plug-and-play framework from the system level to the network level. Specifically, we consider the case where an entire subnetwork is interconnected with an existing network via a set of designated boundary links.

Given two separated undirected and connected graphs $\mathcal{G}_1=\left( {\mathcal{N}_1,\mathcal{E}_1} \right)$ with $\mathcal{N}_1 = \{ 1, \ldots ,n\} $ and $\mathcal{G}_2=\left( {\mathcal{N}_2,\mathcal{E}_2} \right)$ with $\mathcal{N}_2 = \{ n+1, \ldots ,m\} $. Let $\mathcal{G}_1$ and $\mathcal{G}_2$ be networks defined in the same manner as network described by \eqref{eq: system model} and \eqref{eq: input}. Suppose the two disjoint graphs $\mathcal{G}_1,\mathcal{G}_2$ are connected by a set of new edges  $\mathcal{E}_{12}$. As such, each new edge $(i,j)\in \mathcal{E}_{12}$ connects a node   $i\in\mathcal{N}_1$ to another node $j\in\mathcal{N}_2$. The graph obtained by interconnecting $\mathcal{G}_1$ and $\mathcal{G}_2$ is ${\mathcal{G}_{12}} = \left(\mathcal{N}_1\cup\mathcal{N}_2,\mathcal{E}_1\cup\mathcal{E}_2\cup{\mathcal{E}_{12}}\right)$.



It is assumed that the set of boundary edges $\mathcal{E}_{12}$ is selected according to the following condition.  

\begin{assumption}\label{assum: network to network}
For every pair of distinct edges $\left(i_r,j_r\right),\left(i_s,j_s\right)\in\mathcal{E}_{12}$, the boundary nodes $i_r,i_s \in \mathcal{N}_1$ are non-adjacent in $\mathcal{G}_1$, and the boundary nodes $j_r,j_s \in \mathcal{N}_2$ are non-adjacent in $\mathcal{G}_2$. That is, $(i_r,i_s)\notin\mathcal{E}_1$ and $(j_r,j_s)\notin\mathcal{E}_2$ for all $r\neq s$.  
\end{assumption}

\begin{figure}[htbp]
\centering
\subfloat[]
{
\label{fig:subfig1}\includegraphics[width=0.2\textwidth]{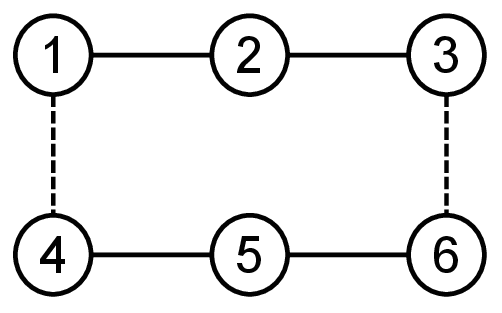}
}
  \subfloat[]
  {      \label{fig:subfig2}\includegraphics[width=0.2\textwidth]{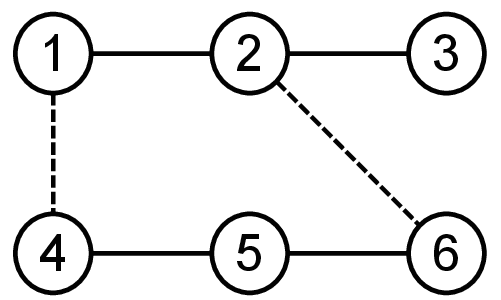}
  }
  \caption{Illustration of the interconnection assumption between two connected graphs 
    $\mathcal{G}_1$ and $\mathcal{G}_2$.}
  \label{fig: assuption}
\end{figure}

To illustrate Assumption~\ref{assum: network to network}, consider two undirected and connected graphs, $\mathcal{G}_1$ and $\mathcal{G}_2$, for example. Let $\mathcal{G}_1$ have node set $\mathcal{N}_1 = \{1,2,3\}$ and edge set $\mathcal{E}_1 = \{(1,2),(2,3)\}$, and let $\mathcal{G}_2$ have node set $\mathcal{N}_2 = \{4,5,6\}$ and edge set $\mathcal{E}_2 = \{(4,5),(5,6)\}$.  
If the interconnecting edges are chosen as $\mathcal{E}_{12} = \{(1,4),(3,6)\}$, as shown in Fig.~\ref{fig: assuption}(a), then the boundary nodes $i_1=1$ and $i_2=3$ in $\mathcal{G}_1$ are non-adjacent, and the boundary nodes $j_1=4$ and $j_2=6$ in $\mathcal{G}_2$ are also non-adjacent. Hence, the assumption is satisfied.   By contrast, if $\mathcal{E}_{12} = \{(1,4),(2,6)\}$, as shown in Fig.~\ref{fig: assuption}(b), then the boundary nodes $i_1=1$ and $i_2=2$ in $\mathcal{G}_1$ are adjacent, which violates the assumption.


For each node $i\in\mathcal{N}_1\cup\mathcal{N}_2$, let $r_i$ denote the number of neighbours of system $H_i$. The following theorem establishes local conditions on the added subnetwork and its interconnecting edges that ensure the IO consensus of the overall network is preserved after the network plug-and-play.

\vspace{2mm}

\begin{theorem}\label{thm: network to network}
Consider two subnetworks represented by graphs $\mathcal{G}_1$ and $\mathcal{G}_2$ interconnected via a set of coupling links $\mathcal{E}_{12}$ that satisfies Assumption \ref{assum: network to network}, where each link $\left( {p,q} \right)\in \mathcal{E}_{12}$ connects node $p \in \mathcal{G}_1$ and node $q \in \mathcal{G}_2$ through a sector-bounded operator ${\phi_{pq}}(\cdot)$. Suppose that all systems in the graphs $\mathcal{G}_1$ and $\mathcal{G}_2$ are IFP with indices $\nu_i,i\in\{1,\ldots,m\}$ and that the following condition is satisfied for each $(i,j)\in\mathcal{E}_1$ in $\mathcal{G}_1$ and $(i,j)\in\mathcal{E}_2$ in $\mathcal{G}_2$
\begin{align}\label{eq: two graph}
&\frac{1}{{{{{\overline\alpha} }_{ij}}}} + {\nu _i} + {\nu _j} - \left( {{r_i} - 1} \right)\left| {{\nu _i}} \right| - \left( {{r_j} - 1} \right)\left| {{\nu _j}} \right| > 0.
\end{align}
Then, both $\mathcal{G}_1$ and $\mathcal{G}_2$ achieve IO consensus. Moreover, under condition \eqref{eq: two graph}, IO consensus is preserved in the aggregate network if 
\begin{align}\label{eq: add a network}
    \gamma_{pq}\left( {\frac{1}{{{{{\overline\alpha} }_{pq}}}} + {\nu _p} + {\nu _q}} \right) - r_p\left| {{\nu _p}} \right|-r_q\left| {{\nu _q}} \right| > 0,\forall \left( {p,q} \right)\in \mathcal{E}_{12}
\end{align}
where ${{{{\overline\alpha} }_{pq}}}$ is the upper sector bound of ${\phi_{pq}}(\cdot)$ and $\gamma_{pq} = \min\{\gamma_p,\gamma_q\}$ with
\begin{align*}
    &\gamma_p:=\min\limits_{(p,j)\in\mathcal{E}_1} \frac{{\frac{1}{{{{{\overline\alpha} }_{pj}}}} + {\nu _p} + {\nu _j} - \left( {{r_p} - 1} \right)\left| {{\nu _p}} \right| - \left( {{r_j} - 1} \right)\left| {{\nu _j}} \right|}}{{\left| {{\nu _p}} \right|}},\\
    &\gamma_q:=\min\limits_{(q,j)\in\mathcal{E}_2} \frac{{\frac{1}{{{{{\overline\alpha} }_{qj}}}} + {\nu _q} + {\nu _j} - \left( {{r_q} - 1} \right)\left| {{\nu _q}} \right| - \left( {{r_j} - 1} \right)\left| {{\nu _j}} \right|}}{{\left| {{\nu _q}} \right|}}.
\end{align*}
\end{theorem}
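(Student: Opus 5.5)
The argument follows the proof of Theorem~\ref{thm: add a system}, with a new weighting matrix $S$ in Lemma~\ref{lem: positive definite}. For the first claim I would apply the first part of Theorem~\ref{thm: add a system} verbatim to $\mathcal{G}_1$ and to $\mathcal{G}_2$ separately. Condition \eqref{eq: two graph} is exactly \eqref{eq: G} for each subgraph, where $r_i$ is the number of neighbours in the subgraph. Taking $s_k=1$ in Lemma~\ref{lem: positive definite} then gives $D_\ell^\top\Psi_\ell D_\ell+\Lambda_\ell\succ 0$ for $\ell=1,2$. The chain \eqref{eq: U,Y}--\eqref{eq: V,W}, together with the sector lower bound, yields \eqref{eq:aim} for each subnetwork.

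For the aggregate network, the operator-theoretic part is unchanged. With $D_{12}$ the incidence matrix of $\mathcal{G}_{12}$, $\Psi_{12}=\mathrm{diag}\{\nu_1,\dots,\nu_m\}$ and $\Lambda_{12}$ collecting $1/\overline\alpha$ over $\mathcal{E}_1\cup\mathcal{E}_2\cup\mathcal{E}_{12}$, inequality \eqref{eq: V,W} holds with these matrices. It therefore suffices to prove $D_{12}^\top\Psi_{12}D_{12}+\Lambda_{12}\succ 0$; the Young-inequality argument then gives $\rho,\sigma$. I would order the edges so that those of $\mathcal{E}_1$ and $\mathcal{E}_2$ come first and the boundary edges last. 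In Lemma~\ref{lem: positive definite} I would take $s_k=1$ on internal edges and $s_k=\gamma_{pq}$ on each boundary edge $(p,q)\in\mathcal{E}_{12}$. Since the paper defines $r_i$ in $\mathcal{G}_1\cup\mathcal{G}_2$ before interconnection, the internal degrees used in $\gamma_p,\gamma_q$ are these $r_i$. I also need $\gamma_p,\gamma_q>0$, which follows from \eqref{eq: two graph}; when $\nu_p=0$ I would read $\gamma_p=+\infty$ and handle that case separately.

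Next I would check \eqref{eq: lemma} edge by edge. This is where Assumption~\ref{assum: network to network} enters, and it is the main bookkeeping obstacle.
\begin{itemize}
\item Internal edge $(i,j)$ with neither endpoint a boundary node: the neighbouring weights are all $1$, so \eqref{eq: lemma} is exactly \eqref{eq: two graph}.
\item Internal edge $(p,j)\in\mathcal{E}_1$ with $p$ a boundary node: by the assumption, $j$ is not a boundary node. Boundary edges are pairwise distinct and connect different nodes of $\mathcal{N}_1$, so $p$ is incident to exactly one boundary edge, of weight $\gamma_{pq}$. The right-hand side of \eqref{eq: lemma} becomes $(r_p-1+\gamma_{pq})|\nu_p|+(r_j-1)|\nu_j|$. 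Hence I need $\frac{1}{\overline\alpha_{pj}}+\nu_p+\nu_j-(r_p-1)|\nu_p|-(r_j-1)|\nu_j|\ge\gamma_{pq}|\nu_p|$, which holds because $\gamma_{pq}\le\gamma_p$ and by the definition of $\gamma_p$ as a minimum. Edges inside $\mathcal{G}_2$ are handled symmetrically using $\gamma_{pq}\le\gamma_q$.
\end{itemize}
I must be careful to spell out that the non-adjacency in Assumption~\ref{assum: network to network} rules out an internal edge joining two boundary nodes, since such an edge would carry two $\gamma$ penalties. It also rules out a boundary node carrying two boundary edges; I read this as implied by the distinct-pair formulation, and would state it explicitly.

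Finally, for a boundary edge $(p,q)$ the weight is $s=\gamma_{pq}$. All other edges at $p$ and at $q$ are internal, of weight $1$, numbering $r_p$ and $r_q$. So \eqref{eq: lemma} reads $\gamma_{pq}\big(\frac{1}{\overline\alpha_{pq}}+\nu_p+\nu_q\big)\ge r_p|\nu_p|+r_q|\nu_q|$, which is \eqref{eq: add a network}. The strict inequalities guarantee the lemma's hypotheses, and connectivity of $\mathcal{G}_{12}$ is clear. Hence the matrix is positive definite, and IO consensus of the aggregate network follows exactly as in Theorem~\ref{thm: add a system}.
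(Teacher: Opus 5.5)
Your proposal is correct and matches the paper's proof essentially step for step. You use the same weights ($s_k=1$ on $\mathcal{E}_1\cup\mathcal{E}_2$, $s_k=\gamma_{pq}$ on boundary edges), the same four-way edge classification via Assumption~\ref{assum: network to network} (including the implicit requirement, which you rightly make explicit, that each boundary node carries only one boundary edge), and the same appeal to Lemma~\ref{lem: positive definite} followed by the argument of Theorem~\ref{thm: add a system}.

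One technicality is shared with the paper. At an internal edge attaining the minimum that defines $\gamma_{pq}$, the row inequality holds only with equality, whereas the Gershgorin argument behind the lemma needs strict row dominance to give $\succ 0$. It is cleaner to put weight $\gamma_{pq}-\epsilon$ on each boundary edge for small $\epsilon>0$; the strict inequality \eqref{eq: add a network} permits this, and it makes every row strict.
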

\vspace{2mm}
\begin{proof}
By Theorem \ref{thm: add a system},  it follows directly that the networks $\mathcal{G}_1$ and $\mathcal{G}_2$ will achieve IO consensus, provided that \eqref{eq: two graph} holds for all $(i,j)\in\mathcal{E}_1$ in $\mathcal{G}_1$ and $(i,j)\in\mathcal{E}_2$ in $\mathcal{G}_2$. For $\mathcal{G}_{12}$, denote by $\mathscr{\hat L}_i^+$ and $\mathscr{\hat L}_i^-$ the sets of edges for which node $i\in\{1,\dots,m\}$ is the positive end and negative end, respectively, and let $\mathscr{\hat L}_i^ \pm:=\mathscr{\hat L}_i^+ \cup \mathscr{\hat L}_i^-$. For each $(p,q) \in \mathcal{E}_{12}$ with $p \in \mathcal{G}_1$ and $q \in \mathcal{G}_2$, we assign without loss of generality node $p$ as the positive end and node $q$ as the negative end of the edge. Let $\bar p_{12}$ and $\hat p_{12}$ be the cardinalities of $\mathcal{E}_1 \cup \mathcal{E}_2$ and $\mathcal{E}_{12}$ respectively. Let $D_1$, $D_2$ and $D_{12}$ be the incidence matrix associated with graphs $\mathcal{G}_1$, $\mathcal{G}_2$ and $\mathcal{G}_{12}$, respectively. The incidence matrix $D_{12} = [d^{12}_{ik}]\in\mathbb{R}^{(n+m)\times(\bar p_{12}+\hat p_{12})}$ is given by $D_{12} = \OneTwo{\bar D_{12}}{\hat D_{12}}$ where $\bar D_{12} = \mathrm{diag}\{D_1,D_2\}$ and $\hat D_{12}\in\mathbb{R}^{(n+m)\times\hat p_{12}}$ denotes the incidence matrix corresponding to the edge set $\mathcal{E}_{12}$. Let $\Psi_{12} := \mathrm{diag}\{\nu_1, \dots, \nu_m\}$ and $\Lambda_{12}:=\mathrm{diag}\{ {\alpha _1}, \ldots ,{\alpha _{(\bar p_{12}+\hat p_{12})}}\}$ with ${\alpha _k} = \frac{1}{{{{\overline{\alpha} }_{ij}}}}$ if $d^{12}_{ik}=1$ and $d^{12}_{jk}=-1$.\par

To determine the positive definiteness of matrix $D_{12}^{\top}\Psi_{12}D_{12}+\Lambda_{12}\in\mathbb{R}^{(\bar p_{12}+\hat p_{12}) \times (\bar p_{12}+\hat p_{12})}$, we first set $S = \mathrm{diag}\{\bar S,\hat S\}$ where $\bar S = \mathrm{diag}\{s_1,\dots, s_{\bar p_{12}}\}$ with $s_k = 1, k\in\{1,\dots,\bar p_{12}\}$ and $\hat S = \mathrm{diag}\{s_{\bar p_{12}+1},\dots,s_{\bar p_{12}+\hat p_{12}}\}$ with $s_k = \gamma_{pq}>0$ if $(p,q) \in \mathcal{E}_{12}$ and $k\in \mathscr{\hat L}^+_p\cap\mathscr{\hat L}^-_q$. For all $j\in\mathcal{N}_1\cup\mathcal{N}_2$ and $(p,q) \in \mathcal{E}_{12}$, according to Assumption~\ref{assum: network to network}, we have $\sum\limits_{l\neq k,\,l \in \mathscr{\hat L}_p^ \pm } {{s_l}\left| {{\nu _p}} \right|} = r_p\left| {{\nu _p}} \right|,\sum\limits_{l\neq k,\, l \in \mathscr{\hat L}_q^ \pm} {{s_l}\left| {{\nu _q}} \right|} = r_q\left| {{\nu _q}} \right|$ if $k\in \mathscr{\hat L}^+_p\cap\mathscr{\hat L}^-_q$, $\sum\limits_{l\neq k,\, l \in \mathscr{\hat L}_j^ \pm } {{s_l}\left| {{\nu _j}} \right|} = (r_j-1)\left| {{\nu _j}} \right|$ if $ k \in \mathscr{\hat L}_j^ \pm$ and $j\notin \{p,q\}$, and $\sum\limits_{l\neq k,\,l \in \mathscr{\hat L}_j^ \pm } {{s_l}\left| {{\nu _j}} \right|} = (r_j-1+\gamma_{pq})\left| {{\nu _j}} \right|$ if $ k \in \mathscr{\hat L}_j^ \pm$ but $k\notin \mathscr{\hat L}^+_p\cap\mathscr{\hat L}^-_q$ for $j\in\{ p,q\}$.
Since the set of coupling links $\mathcal{E}_{12}$ satisfies Assumption~\ref{assum: network to network}, the edges of $\mathcal{G}_{12}$ can be classified into four cases: 1) $(p,q) \in \mathcal{E}_{12}$; 2) $(p,j) \in \mathcal{E}_{1}$ with node $p$ is connected to $\mathcal{G}_2$; 3) $(q,j) \in \mathcal{E}_{2}$ with node $q$ is connected to $\mathcal{G}_1$; 4) $(i,j)\in\mathcal{E}_{1}\cup\mathcal{E}_{2}$ with neither node $i$ nor node $j$ is connected to another graph. Based on these four types of edges, it follows from Lemma \ref{lem: positive definite} that $D_{12}^{\top}\Psi_{12}D_{12}+\Lambda_{12}$ is positive definite if the following conditions hold: 1) $\gamma_{pq}\left( {\frac{1}{{{{\overline{\alpha} }_{pq}}}} + {\nu _p} + {\nu _q}} \right) > r_p\left| {{\nu _p}} \right|+r_q\left| {{\nu _q}} \right|$ for all $(p,q) \in \mathcal{E}_{12}$; 2) $\frac{1}{{{{\overline{\alpha} }_{pj}}}} + {\nu _p} + {\nu _j} > \left( {{r_p} - 1+\gamma_{pq}} \right)\left| {{\nu _p}} \right| + \left( {{r_j} - 1} \right)\left| {{\nu _j}} \right|$ for all $(p,j)\in\mathcal{E}_1$ and $(p,q) \in \mathcal{E}_{12}$; 3) $\frac{1}{{{{\overline{\alpha} }_{qj}}}} + {\nu _q} + {\nu _j} > \left( {{r_q} - 1+\gamma_{pq}} \right)\left| {{\nu _q}} \right| + \left( {{r_j} - 1} \right)\left| {{\nu _j}} \right|$ for all $(q,j)\in\mathcal{E}_2$ and $(p,q) \in \mathcal{E}_{12}$; 4) $\frac{1}{{{{\overline{\alpha} }_{ij}}}} + {\nu _i} + {\nu _j} > \left( {{r_i} - 1} \right)\left| {{\nu _i}} \right| + \left( {{r_j} - 1} \right)\left| {{\nu _j}} \right|$ for all $(i,j)\in\mathcal{E}_{1}\cup\mathcal{E}_{2}$ with neither node $i$ nor node $j$ is connected to another graph. Consequently, if \eqref{eq: two graph} and \eqref{eq: add a network} hold, the aforementioned four conditions are simultaneously satisfied, leading to the conclusion that $D_{12}^{\top}\Psi_{12}D_{12}+\Lambda_{12}$ is positive definite. Following the same reasoning line with the IO consensus analysis in Theorem~\ref{thm: add a system}, it can be obtained that there exist a finite gain $\rho>0$ and a constant $\sigma\ge 0$ such that
\begin{align*}
    {\left\| {D_{12}^{\top} Y_{12}} \right\|_T}\le \rho{\left\| {D_{12}^{\top} W_{12}} \right\|_T}+\sigma,\,\forall T \ge 0,
\end{align*}
where $Y_{12} := {\rm col}\left( {{y_1}, \ldots ,{y_m}} \right)$, $W_{12} := {\rm col}\left( {{w_1}, \ldots ,{w_m}} \right)$.
\end{proof}

\section{Numerical Example}

In this example, as shown in Fig. \ref{fig: add a network}, we consider two networks $\mathcal{G}_1$ consisting of systems $H_1,\dots,H_4$ and $\mathcal{G}_2$ consisting of systems $H_5,\dots,H_7$. The dynamics of the systems are given by
\begin{align*}
&{H_1} = \frac{{s + 1}}{{s(s+0.7)}}, {H_2} =  \frac{{s + 0.9}}{{s(s+0.65)}},{H_3} =\frac{{s + 0.5}}{{s(s+0.4)}},\\
&{H_4} =\frac{{\left( {s + 1.5} \right)\left( {s + 2} \right)}}{{s\left( {s + 1} \right)\left( {s + 1.8} \right)}},{H_5} = \frac{{\left( {s + 0.5} \right)\left( {s + 0.7} \right)}}{{s\left( {s + 0.45} \right)\left( {s + 0.65} \right)}},\\
&{H_6} = \frac{{\left( {s + 1} \right)\left( {s + 1.4} \right)}}{{s\left( {s + 0.8} \right)\left( {s + 1.2} \right)}},{H_7} =  \frac{{\left( {s + 1.7} \right)\left( {s + 1.8} \right)}}{{s\left( {s + 1.2} \right)\left( {s + 1.6} \right)}}.
\end{align*}
\begin{figure}[!ht]
\centering
\includegraphics[width=4cm]{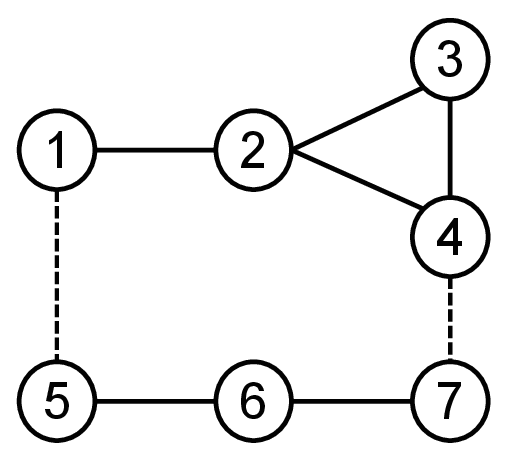}
\caption{The networks considered in the  Example.}
\label{fig: add a network}
\end{figure}
Suppose they are coupled by
$${\phi_{ij}(x)}=\begin{cases}
{a_{ij}\sin}( x ),&\mathrm{if } \left| {x} \right| < \frac{\pi }{2},\\
a_{ij}x,&\mathrm{otherwise},
\end{cases}\quad (i,j) \in \mathcal{E}_1\cup\mathcal{E}_2$$
where $a_{12}= 0.40$, $a_{23}= 0.32$, $a_{24}= 0.30$, $a_{34}= 0.35$, $a_{56}= 0.60$ and $a_{67}= 0.55$. We can obtain that ${{\overline{\alpha} }_{ij}} = a_{ij}$ for all $(i,j) \in \mathcal{E}_1\cup\mathcal{E}_2$. It can be obtained by solving the LMI in \cite[Lemma 2]{kottenstette2014relationships} that systems $H_1,\dots,H_7$ are IFP with indices $\nu_1 = -0.45$, $\nu_2 = -0.60$, $\nu_3 = -0.63$, $\nu_4=-0.65$, $\nu_5 = -0.40$, $\nu_6 =-0.54$ and $\nu_7=-0.51$, respectively. Additionally, we observe that the number of neighbours of system $H_i$ in $\mathcal{G}_1$ and $\mathcal{G}_2$ are $r_1=1$, $r_2=3$, $r_3=2$, $r_4=2$, $r_5=1$, $r_6=2$ and $r_7=1$. Note that $\frac{1}{{{\overline{\alpha} }_{ij}}} + {\nu _i} + {\nu _j} - \left( {{r_i} - 1} \right)\left| {{\nu _i}} \right| - \left( {{r_j} - 1} \right)\left| {{\nu _j}} \right| > 0$ for all $(i,j) \in \mathcal{E}_1\cup\mathcal{E}_2$. By Theorem~\ref{thm: network to network}, the networks $\mathcal{G}_1$ and $\mathcal{G}_2$ achieve IO consensus.\par 
In this example, the subnetworks $\mathcal{G}_1$ and $\mathcal{G}_2$ are interconnected via two coupling links, one connecting $H_1$ with $H_5$ and the other connecting $H_4$ with $H_7$. Suppose they are coupled by $${\phi _{ij}(x)}=\begin{cases}
{a_{ij}\sin( x )},&\text{if } \left| {x} \right| < \frac{\pi }{2},\\
a_{ij} x,&\mathrm{otherwise},
\end{cases}\, (i,j) \in \{(1,5),(4,7)\}$$ 
with $a_{15}=0.37$ and $a_{47}=0.16$. Setting $$\gamma_{15}=\min\limits_{(i,j)\in\{(1,2),(5,6)\}} \frac{{\frac{1}{{{\alpha _{ij}}}} - {r_i}\left| {{\nu _i}} \right| - {r_j}\left| {{\nu _j}} \right|}}{{\left| {{\nu _i}} \right|}} = 0.4667,$$ we obtain ${\gamma_{15}}\left( {\frac{1}{{{{\overline{\alpha}}_{15}}}} + {\nu _1} + {\nu _5}} \right) - r_1\left| {{\nu _1}} \right|- r_5\left| {{\nu _5}} \right|= 0.0147 > 0.$ Similarly, setting $$\gamma_{47}=\min\limits_{(i,j)\in\{(4,2),(4,3),(7,6)\}} \frac{{\frac{1}{{{\alpha _{ij}}}} - {r_i}\left| {{\nu _i}} \right| - {r_j}\left| {{\nu _j}} \right|}}{{\left| {{\nu _i}} \right|}} = 0.3589,$$ we have ${\gamma_{47}}\left( {\frac{1}{{{{\overline{\alpha}}_{47}}}} + {\nu _4} + {\nu _7}} \right) - r_4\left| {{\nu _4}} \right|- r_7\left| {{\nu _7}} \right|= 0.0168 > 0.$ By Theorem \ref{thm: network to network}, we can conclude that IO consensus is preserved in the augmented network. We simulate this by setting the initial value $Y(0) ={\left[ {\begin{array}{*{20}{c}}
-0.25 & -0.55 & -1.25 & -0.4 & -0.0875 &0.2 &1.36
\end{array}} \right]^{\top}} $ and the external signals $w_i(t) = 0.5{\bar w_i}(t)$, where ${\bar w_i}(t)$ is white Gaussian noise with ${\bar w_i}(t) \sim \mathcal{N}(0,1)$. Suppose the interconnection between the two networks occurs at $t=15s$. As shown in Fig. \ref{fig: example2}, the output trajectories within each subnetwork, $\mathcal{G}_1$ and $\mathcal{G}_2$, initially achieve consensus, and after the interconnection is established, the output trajectories of all systems reach a common consensus approximately as per Definition \ref{def: IO consensus}.
\begin{figure}[!ht]
\centering
\includegraphics[width=7cm]{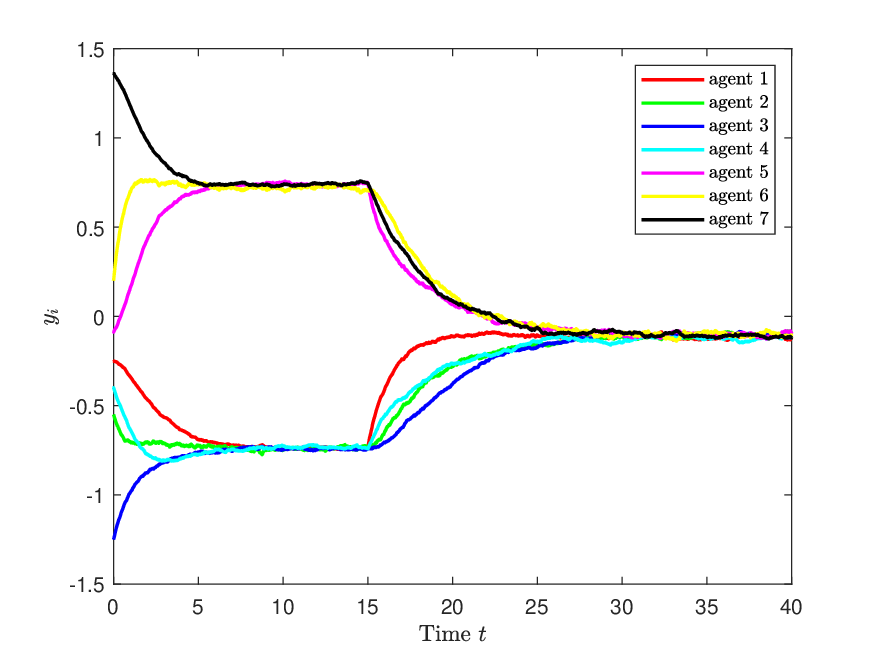}
\caption{Output trajectories of the systems in the Example.}
\label{fig: example2}
\end{figure}

\section{Conclusion}
This paper proposed a plug-and-play framework for output consensus in heterogeneous networks of IFP systems interconnected via nonlinear diffusive couplings. A passivity-compensation mechanism was developed, exploiting the surplus passivity in coupling links to locally compensate the shortages of passivity at the nodes. The analysis was conducted under two settings of the plug-and-play framework: the standard scenario in which a single system joins an existing network, and the more general scenario involving the interconnection of subnetworks. Locally verifiable interface conditions, expressed in terms of passivity indices and coupling gains, were derived to guarantee that consensus properties are preserved upon interconnection.

Future work will investigate more general forms of network plug-and-play consensus, including relaxing the constraints on interconnection edges and extending the framework to multiple interconnected networks.

\section*{Appendix}\label{sec: Appendix}
\noindent \emph{Proof of Lemma \ref{lem: positive definite}}. Recalling the definition of incidence matrix $D=[d_{ik}]\in\mathbb{R}^{n\times p}$, we have
\begin{equation*}
{d_{ik}} = \left\{ 
\begin{matrix}
     + 1, & k \in \mathscr{L}_i^ + \\
 - 1, & k \in \mathscr{L}_i^ - \\
0,& \mathrm{otherwise}.
\end{matrix}
\right.
\end{equation*}
Consequently, the matrix ${\Theta D} =[\varsigma_{ki}]\in\mathbb{R}^{p \times n}$, where
\begin{align*}
{\varsigma_{ik}} = \left\{ 
\begin{matrix}
     + \theta_i, & k \in \mathscr{L}_i^ + \\
 - \theta_i, & k \in \mathscr{L}_i^ - \\
0,& \mathrm{otherwise}.
\end{matrix}
\right.
\end{align*}
It can be observed that ${{D^{\top}}\Theta D} =[\zeta_{kl}]\in\mathbb{R}^{p \times p}$, where
\begin{align}\label{eq: defDEDT}
{\zeta_{kl}} = \left\{
\begin{matrix}
   \theta_i+\theta_j, &  k=l \in \mathscr{L}_i^ + \cap \mathscr{L}_j^ - \\
     \theta_i, & k \in \mathscr{L}_i^ + \cap \mathscr{L}_j^ -, l \in \mathscr{L}_i^ + \\
 - \theta_i, & k \in \mathscr{L}_i^ + \cap \mathscr{L}_j^ -, l \in \mathscr{L}_i^ - \\
    -  \theta_j, & k \in \mathscr{L}_i^ + \cap \mathscr{L}_j^ -, l \in \mathscr{L}_j^ + \\
  \theta_j, & k \in \mathscr{L}_i^ + \cap \mathscr{L}_j^ -, l \in \mathscr{L}_j^ - \\
0,& \mathrm{otherwise}.
\end{matrix}
\right.
\end{align}
Therefore, it follows that $M = [m_{kl}]\in\mathbb{R}^{p \times p}$, where $m_{kk} = {{\theta _i} + {\theta _j} + {\sigma_k}}$ with $k \in \mathscr{L}_i^ + \cap \mathscr{L}_j^ -$, and $\left| {{m_{kl}}} \right| = \left| {{\theta _i}} \right|$ if $l\in \mathscr{L}_i^ \pm$ and $\left| {{m_{kl}}} \right| = \left| {{\theta _j}} \right|$ if $l\in \mathscr{L}_j^ \pm$ for $l\neq k$.\par
Consider a diagonal matrix $S=\mathrm{diag}\left\{ {{s _1}, \ldots, s_p} \right\}\succ 0$ such that $SMS = [\bar m_{kl}] \in\mathbb{R}^{p\times p}$, where each entry satisfies $\bar m_{kl} = m_{kl}s_ks_l$. According to the Gershgorin circle theorem  \cite[p344]{horn2012matrix}, every eigenvalue of $SMS$ lies within the union of $p$ discs, each centered at $\bar m_{kk}$ of radius $\sum\limits_{l = 1,l \ne k}^p {\left| {{{\bar m}_{kl}}} \right|}$. Consequently, when $\bar m_{kk}>\sum\limits_{j = 1,j \ne i}^p {\left| {{{\bar m}_{kl}}} \right|}$ for all $k \in \{1, \ldots, p\}$, i.e., ${s_k}\left( {{\theta _i} + {\theta _j} + {\sigma_k}} \right) \ge \sum\limits_{{l\neq k},\,{l \in \mathscr{L}_i^ \pm }} {{s_l}\left| {{\theta _i}} \right|}  + \sum\limits_{{l\neq k},\,{l \in \mathscr{L}_j^ \pm }} {{s_l}\left| {{\theta _j}} \right|}$ with $k \in \mathscr{L}_i^ + \cap \mathscr{L}_j^ -$ for all $\left( {i,j} \right)\in \mathcal{E}$, all these discs lie in the closed right half-plane, ensuring that $SMS$ is positive, which implies that $M$ is positive.$\hfill\blacksquare$

\bibliographystyle{IEEEtran}
\bibliography{reference}
\end{document}